\documentclass[conference,letter,romanappendices]{ieeeconf}
\let\proof\relax   

\usepackage{amsthm,xpatch}
\usepackage{amsmath,amsfonts}
\usepackage{cite}
\usepackage{amssymb}
\usepackage{dsfont}
\usepackage{graphicx, subfigure}
\usepackage{color}
\usepackage{breqn}
\usepackage{mathtools}
\usepackage{bbm}
\usepackage{latexsym}
\usepackage[ruled, linesnumbered]{algorithm2e}
\usepackage{accents}
\usepackage{tikz}

\newtheorem{lemma}{Lemma}
\newtheorem{theorem}{Theorem}
\newtheorem{remark}{Remark}

\newcommand*{\transpose}{%
  {\mathpalette\@transpose{}}%
}

\IEEEoverridecommandlockouts

\begin{document}

\newcommand{\SB}[3]{
\sum_{#2 \in #1}\biggl|\overline{X}_{#2}\biggr| #3
\biggl|\bigcap_{#2 \notin #1}\overline{X}_{#2}\biggr|
}

\newcommand{\Mod}[1]{\ (\textup{mod}\ #1)}

\newcommand{\overbar}[1]{\mkern 0mu\overline{\mkern-0mu#1\mkern-8.5mu}\mkern 6mu}

\makeatletter
\newcommand*\nss[3]{%
  \begingroup
  \setbox0\hbox{$\m@th\scriptstyle\cramped{#2}$}%
  \setbox2\hbox{$\m@th\scriptstyle#3$}%
  \dimen@=\fontdimen8\textfont3
  \multiply\dimen@ by 4             
  \advance \dimen@ by \ht0
  \advance \dimen@ by -\fontdimen17\textfont2
  \@tempdima=\fontdimen5\textfont2  
  \multiply\@tempdima by 4
  \divide  \@tempdima by 5          
  \ifdim\dimen@<\@tempdima
    \ht0=0pt                        
    \@tempdima=\fontdimen5\textfont2
    \divide\@tempdima by 4          
    \advance \dimen@ by -\@tempdima 
    \ifdim\dimen@>0pt
      \@tempdima=\dp2
      \advance\@tempdima by \dimen@
      \dp2=\@tempdima
    \fi
  \fi
  #1_{\box0}^{\box2}%
  \endgroup
  }
\makeatother

\makeatletter
\renewenvironment{proof}[1][\proofname]{\par
  \pushQED{\qed}%
  \normalfont \topsep6\p@\@plus6\p@\relax
  \trivlist
  \item[\hskip\labelsep
        \itshape
    #1\@addpunct{:}]\ignorespaces
}{%
  \popQED\endtrivlist\@endpefalse
}
\makeatother

\makeatletter
\newsavebox\myboxA
\newsavebox\myboxB
\newlength\mylenA

\newcommand*\xoverline[2][0.75]{%
    \sbox{\myboxA}{$\m@th#2$}%
    \setbox\myboxB\null
    \ht\myboxB=\ht\myboxA%
    \dp\myboxB=\dp\myboxA%
    \wd\myboxB=#1\wd\myboxA
    \sbox\myboxB{$\m@th\overline{\copy\myboxB}$}
    \setlength\mylenA{\the\wd\myboxA}
    \addtolength\mylenA{-\the\wd\myboxB}%
    \ifdim\wd\myboxB<\wd\myboxA%
       \rlap{\hskip 0.5\mylenA\usebox\myboxB}{\usebox\myboxA}%
    \else
        \hskip -0.5\mylenA\rlap{\usebox\myboxA}{\hskip 0.5\mylenA\usebox\myboxB}%
    \fi}
\makeatother

\xpatchcmd{\proof}{\hskip\labelsep}{\hskip3.75\labelsep}{}{}

\pagestyle{plain}

\title{\fontsize{21}{28}\selectfont Capacity of Single-Server Single-Message Private Information Retrieval with Private Coded Side Information}

\author{Anoosheh Heidarzadeh, Fatemeh Kazemi, and Alex Sprintson\thanks{The authors are with the Department of Electrical and Computer Engineering, Texas A\&M University, College Station, TX 77843 USA (E-mail: \{anoosheh, fatemeh.kazemi, spalex\}@tamu.edu).}}

%


\maketitle 

\thispagestyle{plain}

\begin{abstract} 
We study the problem of single-server single-message Private Information Retrieval with Private Coded Side Information (PIR-PCSI). In this problem, there is a server that stores a database, and a user who knows a random linear combination of a random subset of messages in the database. The number of messages contributing to the user's side information is known to the server a priori, whereas their indices and coefficients are unknown to the server a priori. The user wants to retrieve a message from the server (with minimum download cost), while protecting the identities of both the demand and side information messages. 

Depending on whether the demand is part of the coded side information or not, we consider two different models for the problem. For the model in which the demand does not contribute to the side information, we prove a lower bound on the minimum download cost for all (linear and non-linear) PIR protocols; and for the other model wherein the demand is one of the messages contributing to the side information, we prove a lower bound for all scalar-linear PIR protocols. In addition, we propose novel PIR protocols that achieve these lower bounds.
\end{abstract}

\section{introduction}
In the information-theoretic Private Information Retrieval (PIR) problem (see, e.g.,~\cite{Sun2017,JafarPIR3new}), there is a user that wishes to download a single or multiple messages belonging to a database stored on a single or multiple (non-colluding or colluding) servers. The goal of the user is to minimize the download cost (i.e., the amount of information downloaded from the server(s)), while hiding the identity of its demanded message(s) from the server(s). This setup was recently extended in~\cite{Kadhe2017,Tandon2017,Wei2017CacheAidedPI,Wei2017FundamentalLO,KGHERS2017,Chen2017side,HKS2018,Maddah2018,HKGRS:2018,LG:2018} to the settings wherein the user has some side information about the messages in the database, and the side information is unknown to the server(s). 

For the single-server setting of the PIR problem in the presence of some side information, we studied the cases in which the side information is a random subset of messages (a.k.a. PIR with Side Information (PIR-SI)) or a random linear combination of a random subset of messages (a.k.a. PIR with Coded Side Information (PIR-CSI)) in~\cite{Kadhe2017,HKGRS:2018} and \cite{HKS2018}, respectively. The multi-server setting of the PIR-SI problem was also studied in~\cite{KGHERS2017,Chen2017side,Maddah2018}. For the PIR-SI problem, two different types of privacy, known as \emph{$W$-privacy} (i.e., only the identities of the demand messages must be protected) and \emph{$(W,S)$-privacy} (i.e., the identities of both the demand and side information messages must be protected jointly) have been considered, whereas the problem of PIR-CSI has only been studied when $W$-privacy is required. 

In this work, we study the single-server single-message PIR-CSI problem where $(W,S)$-privacy is required. In this problem, referred to as \emph{PIR with Private Coded Side Information (PIR-PCSI)}, there is a single server storing a database of $K$ messages, and there is a user who knows a random linear combination of a random subset of $M$ messages. This setting can be motivated by several practical scenarios. The user may have obtained their side information via overhearing in a wireless network; or from a trusted server with limited knowledge about the database; or from the information locally stored in the user's cache of limited size, to name a few. The user is interested in downloading a single message from the server while preserving the privacy of both the demand message and the messages contributing to the side information. Depending on whether the user's demanded message itself contributes to the user's side information or not, we consider two different models of the PIR-PCSI problem. 

\subsection{Main Contributions}
For the model in which the demanded message is not part of the coded side information, we characterize the capacity and the scalar-linear capacity of the PIR-PCSI problem, where the (scalar-linear) capacity is defined as the supremum of all achievable rates (i.e., the inverse of the download cost) for all (scalar-linear) protocols. In particular, we show that for this model the capacity and the scalar-linear capacity are both equal to $(K-M)^{-1}$ for any ${0\leq M\leq K-1}$. This is interesting because, as shown in~\cite[Theorem~2]{Kadhe2017}, even when the user knows $M$ (uncoded) messages as their side information, in order to guarantee $(W,S)$-privacy, the minimum download cost is $K-M$. This shows that for achieving $(W,S)$-privacy there will be no loss in capacity even if only \emph{one} linear combination of $M$ messages (instead of $M$ messages separately) is known to the user a priori. 

For the model wherein the user's demanded message contributes to their coded side information, we show that the scalar-linear capacity of the PIR-PCSI problem is equal to $(K-M+1)^{-1}$ for any $2\leq M\leq K$. Interestingly, this result shows that when the user knows $M-1$ messages (different from the demand), achieving $(W,S)$-privacy is as costly as that when the user knows only \emph{one} linear combination of the $M-1$ messages and the demand. 

The converse proofs are based on information-theoretic arguments, and the proofs of achievability rely on novel PIR protocols based on the Generalized Reed-Solomon (GRS) codes that include a specific codeword.

\section{Problem Formulation}\label{sec:SN}
Let $\mathbb{F}_q$ be a finite field of size $q$, and let $\mathbb{F}_{q^m}$ be an extension field of $\mathbb{F}_q$ for some integer $m$. Let $L \triangleq m\log_2 q$, and let $\mathbb{F}_q^{\times} \triangleq \mathbb{F}_q\setminus \{0\}$. For a positive integer $i$, we denote $\{1,\dots,i\}$ by $[i]$. Let $K\geq 1$ and $0\leq M\leq K$ be two integers. We denote the set of all subsets of $\mathcal{K}\triangleq [K]$ of size $M$ by $\mathcal{S}$, and the set of all sequences of length $M$ with elements from $\mathbb{F}^{\times}_q$ by $\mathcal{C}$. 

Assume that there is a server that stores a set of $K$ messages $X_1,\dots,X_K$, with each message $X_i$ being independently and uniformly distributed over $\mathbb{F}_{q^m}$, i.e., ${H(X_1) = \dots = H(X_K) = L}$ and $H(X_1,\dots,X_K) = KL$. Also assume that there is a user that wishes to retrieve a message $X_W$ from the server for some $W\in \mathcal{K}$, and knows a linear combination ${Y^{[S,C]}\triangleq \sum_{i\in S} c_i X_i}$ for some $S \triangleq \{i_1,\dots,i_M\}\in \mathcal{S}$ and ${C \triangleq \{c_{i_1},\dots,c_{i_M}\} \in \mathcal{C}}$. We refer to $W$ as the \emph{demand index}, $X_W$ as the \emph{demand}, $S$ as the \emph{side information index set}, $Y^{[S,C]}$ as the \emph{side information}, and $M$ as the \emph{side information size}. 

We denote by $\boldsymbol{S}$, $\boldsymbol{C}$, and $\boldsymbol{W}$ the random variables representing $S$, $C$, and $W$, respectively. We also denote the probability mass function (PMF) of $\boldsymbol{S}$ by $p_{\boldsymbol{S}}(\cdot)$, the PMF of $\boldsymbol{C}$ by $p_{\boldsymbol{C}}(\cdot)$, and the conditional PMF of $\boldsymbol{W}$ given $\boldsymbol{S}$ by $p_{\boldsymbol{W}|\boldsymbol{S}}(\cdot|\cdot)$. We assume that $\boldsymbol{S}$ is uniformly distributed over $\mathcal{S}$, i.e., $p_{\boldsymbol{S}}(S) = \binom{K}{M}^{-1}$ for all $S\in \mathcal{S}$; and $\boldsymbol{C}$ is uniformly distributed over $\mathcal{C}$, i.e., $p_{\boldsymbol{C}}(C) = (q-1)^{-M}$ for all $C\in\mathcal{C}$. Also, we consider two different models for the conditional PMF of $\boldsymbol{W}$ given $\boldsymbol{S}=S$ as follows: 

\subsubsection*{Model~I} $\boldsymbol{W}$ is uniformly distributed over $\mathcal{K}\setminus S$, i.e., 
\begin{equation*}
p_{\boldsymbol{W}|\boldsymbol{S}}(W|S) = 
\left\{\begin{array}{ll}
(K-M)^{-1}, & W\not\in S,\\	
0, & \text{otherwise}.
\end{array}\right.	
\end{equation*} 

\subsubsection*{Model~II} $\boldsymbol{W}$ is uniformly distributed over $S$, i.e., 
\begin{equation*}
p_{\boldsymbol{W}|\boldsymbol{S}}(W|S) = 
\left\{\begin{array}{ll}
M^{-1}, & W\in S,\\	
0, & \text{otherwise};
\end{array}\right.	
\end{equation*} To avoid the degenerate cases, we assume ${0\leq M\leq K-1}$ and $2\leq M\leq K$ for the models~I and~II, respectively.

Let $I^{[W,S]}$ be an indicator function such that ${I^{[W,S]} = 1}$ if $W\in S$, and ${I^{[W,S]} = 0}$ if $W\not\in S$. Note that $\mathbb{P}(\boldsymbol{W}=W',\boldsymbol{S}=S'|I^{[W,S]}=0)$ is equal to ${(K-M)^{-1}\binom{K}{M}^{-1}}$ if $W'\not\in S'$, and it is zero otherwise; and $\mathbb{P}(\boldsymbol{W}=W',\boldsymbol{S}=S'|I^{[W,S]}=1)$ is equal to ${M^{-1}\binom{K}{M}^{-1}}$ if $W'\in S'$, and it is zero otherwise.  

We assume that $I^{[W,S]}$ is known to the server a priori. We also assume that the server knows the size of $S$ (i.e., $M$) and the PMF's $p_{\boldsymbol{S}}(\cdot)$, $p_{\boldsymbol{C}}(\cdot)$, and $p_{\boldsymbol{W}|\boldsymbol{S}}(\cdot|\cdot)$, whereas the realizations $S$, $C$, and $W$ are unknown to the server a priori. 

For any $S$, $C$, and $W$, in order to retrieve $X_W$, the user sends to the server a query $Q^{[W,S,C]}$, which is a (potentially stochastic) function of $W$, $S$, $C$, and $Y^{[S,C]}$. The query $Q^{[W,S,C]}$ must protect the privacy of both the user's demand index $W$ and side information index set $S$ from the server's perspective, i.e., for any given $\theta\in \{0,1\}$,
\begin{align*}& \mathbb{P}(\boldsymbol{W}= W',\boldsymbol{S}=S'| Q^{[W,S,C]},I^{[W,S]}=\theta) \\ & \quad = \mathbb{P}(\boldsymbol{W}= W',\boldsymbol{S}=S'| I^{[W,S]}=\theta)\end{align*} for all $W'\in \mathcal{K}$ and all $S'\in \mathcal{S}$. We refer to this condition as the \emph{$(W,S)$-privacy condition}. Note that the $(W,S)$-privacy condition is stronger than the $W$-privacy condition being previously studied in~\cite{HKS2018}, where the query must protect only the privacy of the user's demand index, i.e., for any given $\theta\in \{0,1\}$, we have $\mathbb{P}(\boldsymbol{W}=W'|Q^{[W,S,C]},I^{[W,S]}=\theta)=\mathbb{P}(\boldsymbol{W}=W'|I^{[W,S]}=\theta)$ for all $W'\in \mathcal{K}$ and all $S'\in \mathcal{S}$. 

Upon receiving $Q^{[W,S,C]}$, the server sends to the user an answer $A^{[W,S,C]}$, which is a (deterministic) function of the query $Q^{[W,S,C]}$, the indicator $I^{[W,S]}$, and the messages in $X$, i.e., $H(A^{[W,S,C]}| Q^{[W,S,C]},I^{[W,S]}, \{X_i\}_{i\in \mathcal{K}}) = 0$. The answer $A^{[W,S,C]}$ along with the query $Q^{[W,S,C]}$, the indicator $I^{[W,S]}$, and the side information $Y^{[S,C]}$ must enable the user to retrieve the demand $X_W$, \[H(X_W| A^{[W,S,C]}, Q^{[W,S,C]}, I^{[W,S]}, Y^{[S,C]})=0.\] This condition is referred to as the \emph{recoverability condition}. 

For each model (I or~II), the problem is to design a query $Q^{[W,S,C]}$ and an answer $A^{[W,S,C]}$ for any $W$, $S$, and $C$ that satisfy the privacy and recoverability conditions. We refer to this problem as \emph{single-server single-message Private Information Retrieval (PIR) with Private Coded Side Information (PCSI)}, or \emph{PIR-PCSI} for short. Specifically, we refer to the PIR-PCSI problem under the model~I as \emph{PIR-PCSI--I}, and under the model~II as \emph{PIR-PCSI--II}.

We refer to a collection of $Q^{[W,S,C]}$ and $A^{[W,S,C]}$ (for all $W$, $S$, and $C$ such that $I^{[W,S]}=0$ or $I^{[W,S]}=1$) which satisfy the privacy and recoverability conditions as a \emph{PIR-PCSI--I protocol} or a \emph{PIR-PCSI--II protocol}, respectively. 

The \emph{rate} of a PIR-PCSI (--I or --II) protocol is defined as the ratio of the entropy of a message, i.e., $L$, to the average entropy of the answer, i.e., $H(A^{[\boldsymbol{W},\boldsymbol{S},\boldsymbol{C}]})=\sum H(A^{[W,S,C]})p_{\boldsymbol{W}|\boldsymbol{S}}(W|S)p_{\boldsymbol{S}}(S)p_{\boldsymbol{C}}(C)$, where the summation is over all $W$, $S$, and $C$ (such that $I^{[W,S]} = 0$ or $I^{[W,S]} = 1$). The \emph{capacity} of PIR-PCSI (--I or --II) problem is defined as the supremum of rates over all PIR-PCSI (--I or --II) protocols. The supremum of rates over all scalar-linear PIR-PCSI (--I or --II) protocols, i.e., the answer contains only scalar-linear combinations of the messages, is defined as the \emph{scalar-linear capacity} of PIR-PCSI (--I or --II) problem. 

In this work, our goal is to characterize the capacity and the scalar-linear capacity of the PIR-PCSI--I and PIR-PCSI--II problems, and to design PIR-PCSI (--I and --II) protocols that are capacity-achieving. 

\section{Main Results}
We present our main results in this section. The capacity and the scalar-linear capacity of PIR-CSI--I problem are characterized in Theorem~\ref{thm:PIRPCSI-I}, and the scalar-linear capacity of PIR-CSI--II problem is characterized in Theorem~\ref{thm:PIRPCSI-II}. The proofs are given in Sections~\ref{sec:PIRPCSI-I} and~\ref{sec:PIRPCSI-II}.

\begin{theorem}\label{thm:PIRPCSI-I}
The capacity and the scalar-linear capacity of PIR-PCSI--I problem with $K$ messages and side information size $0\leq M\leq K-1$ are given by $(K-M)^{-1}$.
\end{theorem}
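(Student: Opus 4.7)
My plan is to prove Theorem~\ref{thm:PIRPCSI-I} in two parts, exhibiting a scalar-linear PIR-PCSI--I protocol of rate $(K-M)^{-1}$ (achievability) and then showing that no protocol, linear or not, can exceed this rate (converse).

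For achievability, I would construct a scalar-linear protocol whose answer consists of exactly $K-M$ symbols of $\mathbb{F}_{q^m}$. The user samples a $[K, K-M]$ generalized Reed--Solomon (GRS) code $\mathcal{G}$ over $\mathbb{F}_q$ uniformly among those containing a specific codeword $v \in \mathbb{F}_q^K$ supported on $S \cup \{W\}$, with $v_W = 1$ and $v_i$ a prescribed scalar multiple of $c_i$ for $i \in S$. This sampling is carried out by choosing the GRS evaluation points $\alpha_1,\dots,\alpha_K$ and column multipliers $\beta_1,\dots,\beta_K$ uniformly from parameter tuples under which $v \in \mathcal{G}$. The query is a generator matrix $G$ of $\mathcal{G}$, and the server returns $G X \in \mathbb{F}_{q^m}^{K-M}$. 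To decode, the user forms the unique $\mathbb{F}_q$-linear combination of the rows of $G$ equal to $v$ and applies it to $G X$, obtaining $v^\top X = X_W + \lambda\, Y^{[S,C]}$ for an explicit $\lambda \in \mathbb{F}_q$, from which $X_W$ is recovered by subtraction. For $(W,S)$-privacy, I would verify that the marginal distribution of $\mathcal{G}$ is invariant across all valid $(W, S, C)$ with $W \notin S$; this follows from a counting argument showing that the number of GRS parameter tuples producing any particular code $\mathcal{G}$ containing a weight-$(M+1)$ codeword $v$ does not depend on $v$ up to an overall normalization.

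For the converse, I would show $H(A^{[W,S,C]} \mid W, S, C) \ge (K-M)L$. Fix any realization $(q, w, s, c)$ of $(Q, W, S, C)$ with $w \notin s$, and write $A = g_q(X)$. By $(W,S)$-privacy, for every $W \in \mathcal{K} \setminus s$ the counterfactual $(W, s)$ has positive posterior given $Q = q$, so there exists $C^*(W, q)$ for which $(W, s, C^*(W, q))$ is consistent with $q$. Recoverability applied to this counterfactual implies that $X_W$ is determined by $(g_q(X), Y^{[s, C^*(W, q)]})$, and hence by $(g_q(X), X_s)$, on the support of the counterfactual's conditional distribution of $X$. Using $(W, S)$-privacy and the independence of $X$ from the user's randomness to extend this pointwise across all of $\mathbb{F}_{q^m}^K$, one concludes that on every fiber of $g_q$ the projection $x \mapsto x_s$ is injective, so each fiber has cardinality at most $q^{mM}$. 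Combining this fiber bound with $H(X \mid Q = q, w, s, c) \ge (K-1)L$ (since $Q$ depends on $X_s$ only through the scalar $Y = c^\top X_s$) and $H(X \mid g_q(X), Q = q, w, s, c) \le (M-1)L$ (fibers intersected with the hyperplane $\{c^\top x_s = y_0\}$) yields $H(A \mid Q, W, S, C) \ge (K-M)L$, whence $H(A \mid W, S, C) \ge H(A \mid Q, W, S, C) \ge (K-M)L$.

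The main obstacle will be rigorously extending the decoding guarantee from each counterfactual's conditional support to all of $\mathbb{F}_{q^m}^K$ fiber-by-fiber. The recoverability condition only asserts decoding on the conditional support of the true---or counterfactual---user, whereas the fiber-injectivity argument requires it for every $x \in \mathbb{F}_{q^m}^K$. Bridging this gap relies on $(W,S)$-privacy to argue that every counterfactual $(W, s, C^*(W, q))$ is indistinguishable from the true user given $Q = q$, together with the fact that $A$ is a deterministic function of $(Q, I, X)$, so the common decoding functions must agree on the union of all counterfactuals' supports.
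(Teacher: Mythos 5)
Your achievability route is essentially the paper's: sample a $(K,K-M)$ GRS code whose row space contains a specific weight-$(M+1)$ codeword supported on $S\cup\{W\}$ that is proportional to $C$ on $S$, send (a randomly permuted) generator matrix as the query, and recover $X_W$ from the unique row combination reproducing that codeword after subtracting the side information. The paper fixes the evaluation points $\omega_1,\dots,\omega_K$ once and for all and randomizes only the multipliers $v_i$ (for $i\notin S$) and the row order, while you randomize over the full parameter tuple; either way the privacy proof reduces to the MDS fact that a fixed GRS code has the same number of minimum-weight codewords on every size-$(M+1)$ support, so this part is fine.

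Your converse, however, has a genuine gap that your own ``main obstacle'' paragraph does not resolve. Your fiber-counting step needs, for a fixed query realization $q$ and true hyperplane $\{c^{\top}x_s=y_0\}$, that the map $x\mapsto(g_q(x),x_s)$ is injective on that hyperplane; you try to obtain this by applying recoverability to counterfactuals $(W^*,s,C^*(W^*,q))$. But recoverability for a counterfactual only gives you that $x_{W^*}$ is determined by $(g_q(x),x_s)$ on that counterfactual's conditional support, which (for a deterministic query map, say) is a different hyperplane $\{(C^*)^{\top}x_s=y^*\}$. The union of these counterfactual hyperplanes has no reason to cover $\{c^{\top}x_s=y_0\}$ --- in particular, there is no freedom in choosing $C^*$ to make it a scalar multiple of $c$ --- so the required injectivity on the fiber you actually need to bound is unsupported, and the ``every fiber of $g_q$'' version is simply false for arbitrary protocols because fibers off the conditional support are unconstrained. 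The paper sidesteps exactly this difficulty by never making a pointwise fiber claim: its Lemma~\ref{prop:1} is an averaged conditional-entropy statement $H(X_{W^*}\mid A,Q,I,Y^{[S^*,C^*]})=0$, and the converse (Lemma~\ref{lem:Conv1}) then runs a direct chain-rule argument --- peel off $H(X_W)$ via recoverability, pick a maximal subset $I\subseteq\mathcal{K}\setminus(W\cup S)$ with $Y,Y_I$ linearly independent to gain $H(X_I)$, and use the linear dependence of the remaining $Y_j$'s on $\{Y,Y_I\}$ to gain $H(X_J)$ --- so it never has to extend a decoding map beyond the support on which recoverability is guaranteed. If you want to keep the fiber picture, you would need an argument that the decoding map is affine (which you are not entitled to for general protocols) or that the conditional supports over all consistent $(W^*,S^*,C^*)$ cover the relevant affine slice, and neither is established; otherwise you should replace the fiber bound with an entropy-chain argument in the spirit of the paper's.
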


The converse follows directly from the result of~\cite[Theorem~2]{Kadhe2017}, which was proven using an index coding argument, for single-server single-message PIR with (uncoded) side information when $(W,S)$-privacy is required. In this work, we provide an alternative proof by upper bounding the rate of any PIR-PCSI--I protocol using information-theoretic arguments (see Section~\ref{subsec:ConvThm1}). The key component of the proof is a necessary condition implied by the $(W,S)$-privacy and recoverability conditions (see Lemma~\ref{prop:1}). The achievability proof relies on a new PIR-PCSI--I protocol, termed the \emph{Specialized GRS Code protocol}, based on the Generalized Reed-Solomon (GRS) codes with a specific codeword, which achieves the rate $(K-M)^{-1}$ (see Section~\ref{subsec:AchThm1}).  

\begin{remark}
\emph{It was shown in~\cite{Kadhe2017} that when there is a single server storing $K$ messages, and there is a user that knows $M$ (uncoded) messages as their side information and demands a single message not in their side information, in order to guarantee the $(W,S)$-privacy condition, the minimum download cost is $K-M$. Surprisingly, this result matches the result of Theorem~\ref{thm:PIRPCSI-I}. This shows that for achieving $(W,S)$-privacy there will be no loss in capacity even if only \emph{one} linear combination of $M$ messages (instead of $M$ messages separately) is known to the user a priori.}
\end{remark}

\begin{remark}
\emph{When $W$-privacy, which is a weaker notion of privacy in comparison to $(W,S)$-privacy, is required (i.e., only the user's demand index, and not the user's side information index set, must be protected from the server), the result of~\cite[Theorem~1]{HKS2018} shows that the capacity of single-server single-message PIR with a coded side information that does not include the demand (known as the PIR-CSI--I problem in~\cite{HKS2018}) is equal to $\lceil\frac{K}{M+1}\rceil^{-1}$. Since $\lceil\frac{K}{M+1}\rceil< K-M$ for all ${1\leq M\leq K-2}$, the capacity of PIR-PCSI--I is strictly smaller than that of PIR-CSI--I, as expected. However, for the two extremal cases of $M=0$ and $M=K-1$, it follows that $(W,S)$-privacy comes at no extra cost than $W$-privacy.}
\end{remark}

\begin{theorem}\label{thm:PIRPCSI-II}
The scalar-linear capacity of PIR-PCSI--II problem with $K$ messages and side information size ${2\leq M\leq K}$ is given by $(K-M+1)^{-1}$.
\end{theorem}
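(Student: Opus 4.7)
The plan is to establish the scalar-linear capacity formula via matching converse and achievability. For the converse, I will show that for any scalar-linear PIR-PCSI--II protocol and any query $Q$ occurring with positive probability, the $\mathbb{F}_q$-span $V\subseteq\mathbb{F}_q^K$ of the coefficient vectors of the scalar-linear combinations comprising the answer $A$ satisfies $\dim V\geq K-M+1$; since $H(A)=L\cdot\dim V$ for scalar-linear answers on i.i.d.\ uniform messages, this yields the rate upper bound $(K-M+1)^{-1}$. The first step is to combine $(W,S)$-privacy (with $I^{[W,S]}=1$) and recoverability to argue that, for every $(W',S')$ with $W'\in S'$ and $|S'|=M$, there must exist some $\mathbf{c}'\in\mathbb{F}_q^K$ with $\mathrm{supp}(\mathbf{c}')=S'$ and nonzero entries on $S'$ such that $\mathbf{e}_{W'}\in V+\mathrm{span}(\mathbf{c}')$. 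Decomposing $\mathbf{e}_{W'}=\mathbf{v}+t\mathbf{c}'$ and choosing $t=(c'_{W'})^{-1}$, this splits into two cases: either $\mathbf{e}_{W'}\in V$, or $V$ contains a vector $\mathbf{v}$ with support exactly $S'\setminus\{W'\}$.

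The central step of the converse is then a parity-check matrix argument. Let $T=\{W'\in[K]:\mathbf{e}_{W'}\in V\}$, so $|T|\leq\dim V$; suppose for contradiction that $\dim V\leq K-M$, giving $|T|\leq K-M$ and $|T^c|\geq M$. For every $M$-subset $S'\subseteq T^c$, applying the case analysis above to any $W'\in S'$ yields a nonzero vector of $V$ supported inside $S'$; letting $\mathbf{H}$ be a parity-check matrix of $V$ and $\mathbf{H}_{S'}$ the $M$ columns of $\mathbf{H}$ indexed by $S'$, this is equivalent to $\mathrm{rank}(\mathbf{H}_{S'})\leq M-1$. Hence no $M$ columns of $\mathbf{H}$ within $T^c$ can be linearly independent, which forces $\mathrm{rank}(\mathbf{H}_{T^c})\leq M-1$. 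Since the columns of $\mathbf{H}$ indexed by $T$ vanish, $\mathrm{rank}(\mathbf{H})=\mathrm{rank}(\mathbf{H}_{T^c})\leq M-1$, contradicting $\mathrm{rank}(\mathbf{H})=K-\dim V\geq M$.

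For the achievability, I will construct a Specialized GRS Code protocol analogous to the Model~I scheme from the proof of Theorem~\ref{thm:PIRPCSI-I}, but built on a $[K,K-M+1]$ GRS code rather than a $[K,K-M]$ one. Given the user's $(W,S,C)$ with $W\in S$, the user picks distinct evaluation points $\alpha_1,\dots,\alpha_K\in\mathbb{F}_q$ and nonzero column multipliers $\beta_1,\dots,\beta_K\in\mathbb{F}_q^\times$ so that the (essentially unique, up to scaling) minimum-weight codeword of the $[K,K-M+1]$ GRS code $V$ with support inside $S$ equals $1-tc_W$ at position $W$ and $-tc_i$ at positions $i\in S\setminus\{W\}$, for some $t\in\mathbb{F}_q^\times$. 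Since that codeword restricted to $S$ has entries $\beta_i\lambda\prod_{j\notin S}(\alpha_i-\alpha_j)$ at $i\in S$, this matching condition pins down $\beta_i$ for $i\in S$ up to an overall scaling $\lambda$, while the $\beta_i$ for $i\notin S$ and the $\alpha_i$'s are drawn uniformly at random from the appropriate subsets of $\mathbb{F}_q^\times$ and $\mathbb{F}_q$. The server's answer is the $K-M+1$ combinations $A_k=\sum_{i=1}^K\beta_i\alpha_i^k X_i$ for $k=0,\dots,K-M$; the user then takes the linear combination of the $A_k$'s corresponding to the polynomial $p(x)=\lambda\prod_{i\notin S}(x-\alpha_i)$ to obtain $\lambda(X_W-tY)$, and uses the known $Y$ to solve for $X_W$.

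The main obstacle will be the $(W,S)$-privacy analysis of this achievability scheme, i.e., showing that the induced distribution over queries (equivalently, over the code $V$) conditioned on the user's $(W,S,C)$ is the same across all $(W',S',C')$ with $W'\in S'$. I plan to establish this via an orbit-counting argument that exploits the symmetries of the GRS parameterization, together with symmetrization over the random $\alpha_i$'s, the free $\beta_i$'s, the scalar $t$, and the global scaling $\lambda$; getting all combinatorial factors to cancel uniformly across $(W',S',C')$ will be the most delicate part of the proof, analogous to the privacy analysis of the corresponding Model~I protocol.
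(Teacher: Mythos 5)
Your converse takes a genuinely different---and arguably cleaner---route than the paper's. The paper proves the converse (Lemma~\ref{lem:Conv2}) by an entropy chain-rule argument: it defines the set of directly recoverable indices (your $T$, called $I$ there), splits into cases according to whether it is empty, and then chains together a specific sequence of hypothetical side-information sets $S_j$ to telescope a lower bound on $H(A)$. You instead phrase everything in terms of the coefficient space $V$: from $(W,S)$-privacy plus recoverability (the paper's Lemma~\ref{prop:1}) you obtain that for every $M$-subset $S'\subseteq T^c$ the space $V$ contains a nonzero vector supported inside $S'$, and the contradiction with $\dim V\leq K-M$ falls out of the rank of the parity-check matrix restricted to $T^c$. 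Both proofs exploit the same engine, but your version replaces the paper's hand-picked $S_j$ chains by a uniform statement over all $M$-subsets of $T^c$, which is easier to audit. One small imprecision: when $t\neq 0$, the vector $\mathbf{v}=\mathbf{e}_{W'}-t\mathbf{c}'$ has support contained in $S'$ but need not equal $S'\setminus\{W'\}$, since $t$ is determined by the decomposition rather than freely choosable; your parity-check step only uses the containment, so this does not affect the argument. Also note that the rate bound should really be stated through $H(A)\geq H(A\mid Q)=\mathbb{E}_Q[\dim V(Q)]\cdot L$; again harmless.

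On achievability, your construction is essentially the paper's Modified Specialized GRS Code protocol (a $[K,K-M+1]$ GRS code with a designed codeword supported on $S$, decoded by the degree-$(K-M)$ annihilator polynomial of the off-support evaluation points). Here, however, there is a genuine gap: you explicitly defer the $(W,S)$-privacy proof, which you yourself flag as the most delicate part, and a capacity theorem is not established until that half is closed. Moreover, randomizing the evaluation points $\alpha_1,\dots,\alpha_K$ is both unnecessary and likely to make that analysis harder, not easier. The paper fixes the evaluation points $\omega_1,\dots,\omega_K$ and lets privacy rest on the weight-distribution symmetry of MDS codes: with $\omega_i$'s fixed, only the multipliers $v_i$ (together with the scalar $c$ and a random row permutation) are drawn at random, and the fact that a $[K,K-M+1]$ GRS code has the same number of minimum-weight codewords on every support of size $M$ makes the induced query distribution identical across all valid $(W',S')$ once one marginalizes over $C$. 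Your orbit-counting plan is plausible in spirit, but as written the achievability half of Theorem~\ref{thm:PIRPCSI-II} remains unproved.
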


The converse proof is based on a mixture of algebraic and information-theoretic arguments (see Section~\ref{subsec:ConvThm2}), and the proof of achievability is based on a modified version of the Specialized GRS Code protocol which achieves the rate $(K-M+1)^{-1}$ (see Section~\ref{subsec:AchThm2}). 

\begin{remark}
\emph{Interestingly, comparing the results of~\cite[Theorem~2]{Kadhe2017} and Theorem~\ref{thm:PIRPCSI-II}, one can see that when the user knows $M-1$ messages (different from the demand) separately, achieving $(W,S)$-privacy is as costly as that when the user's side information is only \emph{one} linear combination of $M$ messages including the demand.}
\end{remark}

\begin{remark}
\emph{As shown in~\cite[Theorem~2]{HKS2018}, when $W$-privacy is required, the capacity of single-server single-message PIR with a coded side information to which the demand message contributes (known as the PIR-CSI--II problem in~\cite{HKS2018}) is equal to $1$ for $M=2$ and $M=K$, and is equal to $\frac{1}{2}$ for all ${3\leq M\leq K-1}$. The result of Theorem~\ref{thm:PIRPCSI-II} matches this result for the cases of $M=K$ and $M=K-1$, and thereby, $(W,S)$-privacy and $W$-privacy are attainable at the same cost. For other cases of $M$, as expected, achieving $(W,S)$-privacy is more costly than achieving $W$-privacy.}
\end{remark}

\section{The PIR-PCSI--I Problem}\label{sec:PIRPCSI-I}

\subsection{Converse for Theorem~\ref{thm:PIRPCSI-I}}\label{subsec:ConvThm1}
Obviously, the capacity of PIR-PCSI--I is upper bounded by the capacity of PIR with uncoded side information where $(W,S)$-privacy is required, which was shown to be ${(K-M)^{-1}}$ in~\cite{Kadhe2017} using an index-coding argument, where $M$ uncoded messages are available at the user as side information. This proves the converse for Theorem~\ref{thm:PIRPCSI-I}. We present an alternative information-theoretic proof here. 

The following result gives a necessary condition for $(W,S)$-privacy and recoverability. 

\begin{lemma}\label{prop:1}
For any $\theta\in \{0,1\}$, $W\in \mathcal{K}$, and $S\in \mathcal{S}$ where $I^{[W,S]}=\theta$, and $C\in \mathcal{C}$, and any ${W^{*}\in \mathcal{K}}$ and $S^{*}\in \mathcal{S}$ where $I^{[W^{*},S^{*}]}=\theta$, there must exist ${C^{*}\in \mathcal{C}}$ such that \[H(X_{W^{*}}| A^{[W,S,C]}, Q^{[W,S,C]}, I^{[W,S]}, Y^{[S^{*},C^{*}]}) = 0.\] 
\end{lemma}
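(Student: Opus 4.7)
The plan is to combine the $(W,S)$-privacy condition with the deterministic answer rule and the recoverability condition applied at an \emph{alternative} realization $(W^{*},S^{*},C^{*})$. Intuitively, privacy prevents the server (and hence any function of the query) from distinguishing $(W,S)$ from any $(W^{*},S^{*})$ sharing the same indicator value $\theta$, so there must exist some $C^{*}\in\mathcal{C}$ under which $(W^{*},S^{*},C^{*})$ is consistent with the actual query $Q^{[W,S,C]}$ and answer $A^{[W,S,C]}$. Recoverability at $(W^{*},S^{*},C^{*})$ then says that the virtual user with side information $Y^{[S^{*},C^{*}]}$ can decode $X_{W^{*}}$ from that same query/answer pair, which is exactly the conclusion of the lemma.

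More concretely, I would start from the privacy identity
\[\mathbb{P}(\boldsymbol{W}=W^{*},\boldsymbol{S}=S^{*}\mid Q^{[W,S,C]},I^{[W,S]}=\theta)=\mathbb{P}(\boldsymbol{W}=W^{*},\boldsymbol{S}=S^{*}\mid I^{[W,S]}=\theta),\]
and note that the right-hand side is strictly positive because $I^{[W^{*},S^{*}]}=\theta$ places $(W^{*},S^{*})$ in the support of the $\theta$-conditional prior. This places the realized query $Q^{[W,S,C]}$ into the support of the query distribution at $(W^{*},S^{*})$ marginalized over $\boldsymbol{C}$; a pigeonhole-style argument, using the finiteness of $\mathcal{C}$ and the uniformity and independence of $\boldsymbol{C}$, then extracts a single $C^{*}\in\mathcal{C}$ for which $Q^{[W,S,C]}$ lies in the support of $Q^{[W^{*},S^{*},C^{*}]}$.

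Next I would apply the deterministic answer rule $H(A^{[\cdot]}\mid Q^{[\cdot]},I^{[\cdot]},\{X_i\}_{i\in\mathcal{K}})=0$, which identifies the answer as a common deterministic function of the query, the indicator, and the messages. Since $I^{[W,S]}=I^{[W^{*},S^{*}]}=\theta$ and the two queries agree on the sample paths in question, the two answers agree as well: $A^{[W,S,C]}=A^{[W^{*},S^{*},C^{*}]}$. Invoking the recoverability condition at $(W^{*},S^{*},C^{*})$ then gives
\[H(X_{W^{*}}\mid A^{[W^{*},S^{*},C^{*}]},Q^{[W^{*},S^{*},C^{*}]},I^{[W^{*},S^{*}]},Y^{[S^{*},C^{*}]})=0,\]
and substituting the equalities established above converts this into the identity claimed in the lemma.

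The part I expect to be most delicate is the extraction of a single $C^{*}$ in the first step: privacy is most naturally a pointwise statement and a literal application only yields a $C^{*}$ that may depend on the realized query value $q$, whereas the lemma needs one $C^{*}$ uniformly. The structural symmetries of the setup---uniformity and independence of $\boldsymbol{C}$, the interchangeable roles played by the coefficient vectors in $\mathcal{C}$, and the fact that $(W,S)$ and $(W^{*},S^{*})$ share the indicator $\theta$---should be the key leverage for upgrading the pointwise statement to the required uniform one; once this is in place, the remainder of the argument is essentially syntactic.
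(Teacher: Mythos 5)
The paper itself declines to spell out the proof, saying only that it is ``straightforward by the way of contradiction,'' so there is no proof to compare against directly. Your ingredients are the right ones (privacy, the answer being a deterministic common function of the query, indicator, and messages, and recoverability at the alternate realization), and you correctly flag the extraction of a \emph{single} $C^{*}$ as the delicate step. The problem is that the ``upgrade'' you are hoping for is not merely delicate --- as you have set it up, it is false, and invoking ``structural symmetries'' does not repair it.

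Concretely, take the paper's own Specialized GRS Code protocol with $K=2$, $M=1$. Fix $(W,S,C)=(1,\{2\},\{c_{2}\})$ and $(W^{*},S^{*})=(2,\{1\})$. The query is the pair $(v_{1},c_{2})$ with $v_{1}$ drawn uniformly from $\mathbb{F}_{q}^{\times}$, so its support is $\{(v_{1},c_{2}):v_{1}\in\mathbb{F}_{q}^{\times}\}$, whereas the support of $Q^{[2,\{1\},\{c_{1}^{*}\}]}$ is $\{(c_{1}^{*},v_{2}):v_{2}\in\mathbb{F}_{q}^{\times}\}$. These intersect in the single point $(c_{1}^{*},c_{2})$. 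Thus for every realized query $q=(v_{1},c_{2})$ the only compatible $C^{*}$ is $\{v_{1}\}$, which varies with $q$; no fixed $C^{*}$ puts the whole support of $Q^{[W,S,C]}$ inside the support of $Q^{[W^{*},S^{*},C^{*}]}$. Your step 1, as phrased, therefore cannot be completed. What actually makes the lemma true in this example is different: since $Q$ appears in the conditioning, the realized $v_{1}$ is known, and for \emph{any} $c_{1}^{*}\neq 0$, $Y^{[\{1\},\{c_{1}^{*}\}]}=c_{1}^{*}X_{1}$ determines $X_{1}$ and hence $X_{2}=(A-v_{1}X_{1})/c_{2}$. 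So the conclusion holds because the query itself supplies the translation between coefficient vectors, not because the realized query is consistent with a fixed $(W^{*},S^{*},C^{*})$. A correct proof must exploit this (or run the contradiction argument the paper hints at in a way that uses the conditioning on $Q$ and the full recoverability condition), rather than forcing $q$ into the support of a fixed $Q^{[W^{*},S^{*},C^{*}]}$.
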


\begin{proof}
The proof is straightforward by the way of contradiction, and hence omitted.
\end{proof}

\begin{lemma}\label{lem:Conv1}
For any $0\leq M\leq K-1$, the capacity of PIR-PCSI--I is upper bounded by ${(K-M)^{-1}}$.
\end{lemma}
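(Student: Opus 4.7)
The plan is to invoke Lemma~\ref{prop:1} to show that, once we reveal all $M$ messages indexed by the actual side-information set $S$, the server's answer (together with the query and indicator) determines all the $K-M$ remaining messages. A one-line entropy count then forces the answer entropy to be at least $(K-M)L$ on average.

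Fix an arbitrary realization $(W,S,C)$ with $I^{[W,S]}=0$. For each $W^{*}\in \mathcal{K}\setminus S$, I would apply Lemma~\ref{prop:1} with $\theta=0$ and $S^{*}=S$ to obtain a coefficient tuple $C^{*}=C^{*}(W^{*})$, measurable with respect to the realization of $(A^{[W,S,C]},Q^{[W,S,C]},I^{[W,S]})$, for which $X_{W^{*}}$ is a deterministic function of $(A^{[W,S,C]},Q^{[W,S,C]},I^{[W,S]},Y^{[S,C^{*}]})$. Since $Y^{[S,C^{*}]}=\sum_{i\in S}c^{*}_{i}X_{i}$ is computable from $C^{*}$ and $\{X_{i}\}_{i\in S}$, it follows that
\[
H\bigl(X_{W^{*}}\bigm|A^{[W,S,C]},Q^{[W,S,C]},I^{[W,S]},\{X_{i}\}_{i\in S}\bigr)=0
\]
for every $W^{*}\notin S$, and collecting these over $W^{*}$ yields
\[
H\bigl(\{X_{i}\}_{i\notin S}\bigm|A^{[W,S,C]},Q^{[W,S,C]},I^{[W,S]},\{X_{i}\}_{i\in S}\bigr)=0.
\]

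Next, for fixed $(W,S,C)$ the query $Q^{[W,S,C]}$ is a stochastic function of $\{X_{i}\}_{i\in S}$ (through $Y^{[S,C]}$) together with private randomness independent of the messages, and $I^{[W,S]}$ is a constant; hence $\{X_{i}\}_{i\notin S}$ is independent of $\bigl(Q^{[W,S,C]},I^{[W,S]},\{X_{i}\}_{i\in S}\bigr)$. Combining this with the vanishing conditional entropy above gives
\begin{align*}
H\bigl(A^{[W,S,C]}\bigr) &\geq H\bigl(A^{[W,S,C]}\bigm|Q^{[W,S,C]},I^{[W,S]},\{X_{i}\}_{i\in S}\bigr)\\
&\geq H\bigl(\{X_{i}\}_{i\notin S}\bigm|Q^{[W,S,C]},I^{[W,S]},\{X_{i}\}_{i\in S}\bigr)\\
&=(K-M)L.
\end{align*}
Averaging over $(\boldsymbol{W},\boldsymbol{S},\boldsymbol{C})$ produces $H\bigl(A^{[\boldsymbol{W},\boldsymbol{S},\boldsymbol{C}]}\bigr)\geq (K-M)L$, so the rate is at most $(K-M)^{-1}$.

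The delicate point is the first step: Lemma~\ref{prop:1} delivers only one $C^{*}$ per pair $(W^{*},S^{*})$, and different values of $W^{*}$ may need different $C^{*}$'s. The reason the conditioning may nevertheless be collapsed onto the entire tuple $\{X_{i}\}_{i\in S}$ is that, whichever $C^{*}$ is required, the corresponding combination $Y^{[S,C^{*}]}$ is a deterministic function of $\{X_{i}\}_{i\in S}$ once $C^{*}$ is fixed, and $C^{*}$ is in turn a function of $W^{*}$ and of the quantities already appearing in the conditioning. Everything else is standard entropy manipulation.
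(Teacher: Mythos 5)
Your argument is correct, and it reaches the same bound as the paper by a genuinely cleaner route. The paper's proof (its Lemma~\ref{lem:Conv1}) first splits off $H(X_W)$, then applies Lemma~\ref{prop:1} with $S^{*}=S$ to produce side informations $Y_j = Y^{[S,C_j]}$ for the indices $j\notin W\cup S$, and then must carefully partition those indices into a maximal set $I$ for which $\{Y,Y_I\}$ are linearly independent and a remainder $J$ for which the $Y_j$'s are already determined by $\{Y,Y_I\}$. Only after this two-stage bookkeeping does it conclude $H(A)\ge H(X_W)+H(X_I)+H(X_J)=(K-M)L$. You sidestep the linear-independence bookkeeping entirely by conditioning on the full tuple $\{X_i\}_{i\in S}$ from the outset: since for every fixed $C^{*}$ the quantity $Y^{[S,C^{*}]}$ is a deterministic function of $\{X_i\}_{i\in S}$, conditioning on $\{X_i\}_{i\in S}$ dominates conditioning on whichever $Y^{[S,C^{*}]}$ Lemma~\ref{prop:1} supplies, so you immediately get $H(\{X_i\}_{i\notin S}\mid A,Q,I^{[W,S]},\{X_i\}_{i\in S})=0$ in one pass over $W^{*}\notin S$. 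Combined with the observation that $\{X_i\}_{i\notin S}$ is independent of $(Q,I^{[W,S]},\{X_i\}_{i\in S})$ — because for fixed $(W,S,C)$ the query depends on the messages only through $Y^{[S,C]}$, which is a function of $\{X_i\}_{i\in S}$ — this gives $H(A)\ge (K-M)L$ for every realization, and averaging finishes. The substance (Lemma~\ref{prop:1} applied with $S^{*}=S$, plus independence of the messages outside $S$ from the query) is the same; what your conditioning choice buys is that the rank/maximality argument of the paper becomes unnecessary, since $\{X_i\}_{i\in S}$ already spans everything any $Y^{[S,C^{*}]}$ could reveal.

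One small point worth making explicit if you write this up: Lemma~\ref{prop:1} as stated hands you a single $C^{*}$, but in your reduction $C^{*}$ may vary with $W^{*}$ (and, as you note, possibly with the realization of the query). This is harmless for exactly the reason you give — every such $Y^{[S,C^{*}]}$ is a function of $\{X_i\}_{i\in S}$ once $C^{*}$ is fixed, and $C^{*}$ is a function of quantities already in the conditioning — but since this is precisely the step that replaces the paper's maximal-independent-set argument, it deserves to be spelled out as a lemma-level observation rather than a parenthetical.
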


\begin{proof}
Fix $W$, $S$, and $C$ (and accordingly, $Y\triangleq Y^{[S,C]}$) such that $I^{[W,S]}=0$, and let $Q\triangleq Q^{[W,S,C]}$ and $A\triangleq A^{[W,S,C]}$ be the user's query and the server's answer, respectively, for an arbitrary PIR-PCSI-I protocol. We need to show that $H(A^{[\boldsymbol{W},\boldsymbol{S},\boldsymbol{C}]})=H(A)\geq (K-M)L$. Similar to the proof of \cite[Theorem~1]{HKS2018},  it can be shown that 
\begin{equation}\label{eq:line1}
H(A)\geq H(X_W)+H(A|Q,Y,X_W).	
\end{equation} If $W\cup S = \mathcal{K}$ (i.e., $M=K-1$), then we have $H(A)\geq H(X_W)=L$, as was to be shown. If $W\cup S \neq \mathcal{K}$, for any ${j\in \mathcal{K}\setminus (W\cup S)}$ there exists $C_{j}\in \mathcal{C}$ (and accordingly, $Y_j\triangleq Y^{[S,C_j]}$) such that $H(X_{j}|A,Q,Y_{j})=0$ (by Lemma~\ref{prop:1}). Let $I$ be a maximal subset of ${\mathcal{K}\setminus (W\cup S)}$ such that $Y$ and $Y_I\triangleq \{Y_{j}\}_{j\in I}$ are linearly independent. (Note that ${|I|\leq |S|-1=M-1}$.) Let $X_I\triangleq \{X_j\}_{j\in I}$. Then, we have
\begin{align}
H(A|Q,Y,X_W) &\geq H(A|Q,Y,X_W,Y_I)\nonumber \\
& = H(A|Q,Y,X_W,Y_I) \nonumber \\ 
& \quad + H(X_I|A,Q,Y,X_W,Y_I)\label{eq:line2}\\
& = H(X_I|Q,Y,X_W,Y_I)\nonumber\\
& \quad +  H(A|Q,Y,X_W,Y_I,X_I)\nonumber\\
& = H(X_I) +H(A|Q,Y,X_W,Y_I,X_I)\label{eq:line3}	
\end{align} where~\eqref{eq:line2} holds because $H(X_{j}|A,Q,Y_{j})=0$ for all $j\in I$ (by assumption); and~\eqref{eq:line3} holds since $X_I$ is independent of $(Q,Y,X_W,Y_I)$ (noting that $I$ and $W\cup S$ are disjoint). Note also that, by the maximality of $I$, for any $j\in J\triangleq {\mathcal{K}\setminus (W\cup S\cup I)}$, there exists $C_j\in C$ (and accordingly, $Y_j\triangleq Y^{[S,C_j]}$, which is linearly dependent on $\{Y,Y_I\}$) such that $H(X_j|A,Q,Y_j) = 0$, and subsequently, $H(X_j|A,Q,Y_I) = 0$. (Note that $|J|={K-M-1-|I|}$.) Thus, we can write
\begin{align}
& H(A|Q,Y,X_W,Y_I,X_I)\nonumber\\
&	\quad = H(A|Q,Y,X_W,Y_I,X_I)\nonumber\\
& \quad\quad + H(X_{J}|A,Q,Y,X_W,Y_{I},X_I)\label{eq:line4}\\
& \quad = H(X_J|Q,Y,X_W,Y_I,X_I)\nonumber\\
& \quad \quad + H(A|Q,Y,X_W,Y_I,X_I,X_J)\nonumber\\
& \quad \geq H(X_{J})\label{eq:line5}
\end{align} where~\eqref{eq:line4} holds since $H(X_{j}|A,Q,Y_{I})=0$ for all $j\in J$ (by assumption); and~\eqref{eq:line5} holds because $X_{J}$ and $(Q,Y,X_W,Y_I,X_I)$ are independent (noting that $J$ and ${W\cup S\cup I}$ are disjoint). Putting~\eqref{eq:line1}, \eqref{eq:line2}, \eqref{eq:line3}, and \eqref{eq:line5} together, it follows that $H(A)\geq H(X_W)+H(X_I)+H(X_{J}) = (K-M)L$, as was to be shown.
\end{proof}

\subsection{Achievability for Theorem \ref{thm:PIRPCSI-I}}\label{subsec:AchThm1}
In this section, we propose a PIR-PCSI--I protocol for arbitrary $K$ and $M$ that achieves the rate $(K-M)^{-1}$. Throughout, we assume that $q$ is sufficiently large, particularly $q\geq K$. For arbitrary $q<K$, the achievability of the rate $(K-M)^{-1}$, which is not necessarily feasible, is conditional on the existence of a $(K,K-M)$ maximum-distance-seperable (MDS) code over $\mathbb{F}_q$ that includes a codeword with support $S\cup W$ such that the $i$th codeword symbol is $c_i$ for $i\in S$, and is non-zero for $i=W$.

Assume that $q\geq K$, and let $\omega_1,\dots,\omega_K$ be $K$ distinct elements from $\mathbb{F}_q$. 

\textbf{Specialized GRS Code Protocol:} This protocol consists of four steps as follows: 

\emph{Step 1:} The user first constructs a polynomial  ${p(x) = \sum_{i=0}^{K-M-1} p_i x^i \triangleq \prod_{i\not\in S\cup W} (x-\omega_i)}$, and then constructs $K-M$ sequences $Q_1,\dots,Q_{K-M}$, each of length $K$, such that $Q_i=\{v_1\omega_1^{i-1},\dots,v_K\omega_K^{i-1}\}$ for $i\in [K-M]$, where $v_i=\frac{c_i}{p(\omega_i)}$ for $i\in S$, and $v_i$ is a randomly chosen element from $\mathbb{F}_q^{\times}$ for $i\not\in S$. 

For any ${i\in [K-M]}$, the $j$th element, for any ${j\in \mathcal{K}}$, in the sequence $Q_i$ can be thought of as the entry $(i,j)$ of a $(K-M)\times K$ matrix $G\triangleq {[g_1^{\mathsf{T}},\dots,g_{K-M}^{\mathsf{T}}]}^{\mathsf{T}}$, which is the generator matrix of a $(K,K-M)$ GRS code with distinct parameters ${\omega_1,\dots,\omega_{K}}$ and non-zero multipliers $v_1,\dots,v_K$~\cite{Roth:06}. The construction above ensures that such a GRS code has a specific codeword with support $S\cup W$, namely $\sum_{i=1}^{K-M} p_{K-M-i} g_i$, where the $i$th codeword symbol is $c_i$ for $i\in S$, and is non-zero for $i=W$. 

\emph{Step 2:} The user reorders $Q_1,\dots,Q_{K-M}$ by a randomly chosen permutation ${\sigma:[K-M]\rightarrow [K-M]}$, and sends the query $Q^{[W,S,C]} = \{Q_{\sigma^{-1}(1)},\dots,Q_{\sigma^{-1}(K-M)}\}$ to the server.

\emph{Step 3:} By using $Q_i$, the server computes $A_{i} = \sum_{j=1}^{K}  v_j\omega_j^{i-1} X_{j}$ for all $i\in [K-M]$ where $Q_{i} = \{v_1\omega_1^{i-1} ,\dots,v_K\omega_K^{i-1} \}$, and it sends the answer $A^{[W,S,C]}=\{A_{\sigma^{-1}(1)},\dots,A_{\sigma^{-1}(K-M)}\}$ to the user. 

Note that $A_i$'s are the parity check equations of a $(K,M)$ GRS code which is the dual code of the GRS code generated by the matrix $G$ defined earlier. 

\textbf{\it Step 4:} Upon receiving the answer, the user retrieves $X_W$ by subtracting off the contribution of the side information $Y^{[S,C]}$ from $\sum_{i=1}^{K-M} p_{K-M-i} A_{\sigma(i)} = c_W X_W+\sum_{i\in S} c_{i}X_{i}$.

\begin{lemma}\label{lem:Ach1}
The Specialized GRS Code protocol is a PIR-PCSI--I protocol, and achieves the rate $(K-M)^{-1}$. 
\end{lemma}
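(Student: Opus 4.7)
The plan is to verify the three requirements of the claim: the rate bound, the recoverability condition, and the $(W,S)$-privacy condition. The rate claim is immediate: the answer consists of $K-M$ symbols of $\mathbb{F}_{q^m}$, so $H(A^{[W,S,C]})\leq (K-M)L$ for every realization of $(W,S,C)$, which yields rate at least $(K-M)^{-1}$; combined with Lemma~\ref{lem:Conv1}, this pins the rate down exactly.

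For recoverability, I would verify that the specialized GRS codeword $u=(v_1 p(\omega_1),\dots,v_K p(\omega_K))$ produced by taking $f=p$ in the GRS encoding can be written as a linear combination of the rows $g_1,\dots,g_{K-M}$ of $G$ via the coefficients of $p(x)$, and that therefore the corresponding linear combination of the $A_i$'s (after undoing $\sigma$) evaluates to $\sum_{j=1}^{K} u_j X_j$. Since $p$ vanishes at $\omega_j$ for every $j\notin S\cup W$, one has $u_j=0$ there; since $v_i=c_i/p(\omega_i)$ for $i\in S$, one has $u_i=c_i$ there; and since $\omega_W$ is not a root of $p$ and $v_W\in\mathbb{F}_q^{\times}$, one has $u_W\neq 0$. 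Consequently the user can subtract $Y^{[S,C]}=\sum_{i\in S}c_i X_i$ and divide by $u_W$ to recover $X_W$.

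The hardest step is the $(W,S)$-privacy condition. In view of the marginal distribution of $(\boldsymbol{W},\boldsymbol{S})$ conditioned on $I^{[W,S]}=0$, it suffices by a standard Bayes calculation to show that the conditional law of $Q^{[W,S,C]}$ given $(W,S)$ with $I^{[W,S]}=0$, marginalized over $C$ and the protocol's internal randomness, does not depend on the particular pair $(W,S)$. My plan is to argue this via the following observations: (i) the query is entirely determined by the tuple $(v_1,\dots,v_K)\in(\mathbb{F}_q^{\times})^K$ together with the permutation $\sigma$, since $Q_i=(v_1\omega_1^{i-1},\dots,v_K\omega_K^{i-1})$; (ii) for $j\notin S$ the multiplier $v_j$ is drawn uniformly from $\mathbb{F}_q^{\times}$ by construction; (iii) for $j\in S$, $v_j=c_j/p(\omega_j)$ is a fixed nonzero scalar times $c_j$, and since the $c_j$'s are jointly independent and uniform on $\mathbb{F}_q^{\times}$, so are the corresponding $v_j$'s; and (iv) $\sigma$ is a uniformly random independent permutation of $[K-M]$. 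Combining these, the joint distribution of $Q^{[W,S,C]}$ depends only on the fixed field elements $\omega_1,\dots,\omega_K$ and on the uniform randomness over $(\mathbb{F}_q^{\times})^K$ and the symmetric group on $[K-M]$, with no residual dependence on $(W,S)$. I expect the main obstacle to be the careful bookkeeping needed to confirm that the random permutation $\sigma$ and the change of variables $c_j\mapsto v_j$ together erase every trace of $(W,S)$ from the query's law.
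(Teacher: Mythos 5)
Your proposal is correct, and it takes a genuinely different route from the paper for the privacy argument. For the rate, the paper computes $H(A^{[W,S,C]})=(K-M)L$ exactly by invoking the MDS property (the rows of $G$ are linearly independent, so $A_1,\dots,A_{K-M}$ are independent and uniform over $\mathbb{F}_{q^m}$); you instead use the trivial upper bound $H(A^{[W,S,C]})\le(K-M)L$ and appeal to Lemma~\ref{lem:Conv1}, which is logically sound but leans on the converse rather than being self-contained. For recoverability, you spell out what the paper dismisses as ``obvious from the construction,'' and your account is accurate. The more interesting divergence is in the privacy argument. The paper reasons combinatorially: the $(K,K-M)$ GRS code is MDS, its minimum-weight codewords have weight $M+1$, and every support of size $M+1$ carries the same number of them, so the server cannot distinguish which $(W',S')$ the query targets. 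You instead argue distributionally: the query is a deterministic function of the multiplier vector $v=(v_1,\dots,v_K)$ and the permutation $\sigma$, and you show $v$ is uniform on $(\mathbb{F}_q^\times)^K$ independently of $(W,S)$ --- for $j\notin S$ (including $j=W$) by direct random choice, and for $j\in S$ because $v_j=c_j/p(\omega_j)$ with $p(\omega_j)\neq0$ and $c_j$ uniform on $\mathbb{F}_q^\times$. Since $\sigma$ is an independent uniform permutation, the conditional law of $Q^{[W,S,C]}$ given $(W,S)$ has no dependence on $(W,S)$, and $(W,S)$-privacy follows by Bayes. This argument is more direct and arguably more transparent than the paper's weight-enumerator count, and it makes explicit what the paper leaves implicit (namely, why ``same number of minimum-weight codewords per support'' yields the required posterior equality). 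One small imprecision worth fixing: you call $1/p(\omega_j)$ a ``fixed nonzero scalar,'' but $p$ depends on $(W,S)$; the correct observation is simply that $p(\omega_j)\neq0$ for $j\in S$, so multiplication by it is a bijection of $\mathbb{F}_q^\times$ and uniformity of $c_j$ propagates to $v_j$ regardless of its actual value.
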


\begin{proof}
Since the matrix $G$, defined in Step~1 of the protocol, generates a $(K,K-M)$ GRS code which is an MDS code, then the rows of $G$ are linearly independent, and accordingly, $A_1,\dots,A_{K-M}$ are linearly independent combinations of $X_1,\dots,X_K$, which are themselves independently and uniformly distributed over $\mathbb{F}_{q^{m}}$. Thus, $A_1,\dots,A_{K-M}$ are independently and uniformly distributed over $\mathbb{F}_{q^{m}}$. Since $H(X_1)=\dots=H(X_K) = L$, then $H(A_1)=\dots=H(A_{K-M})=L$, and $H(A^{[W,S,C]}) = H(A_1,\dots,A_{K-M})=\sum_{i=1}^{K-M} H(A_i)=(K-M)L$ for any $S\in \mathcal{S}$, any $W\not\in S$, and any $C\in \mathcal{C}$. Since the joint distribution of $\boldsymbol{W}$ and $\boldsymbol{S}$ is uniform and $\boldsymbol{C}$ is uniformly distributed, then $H(A^{[\boldsymbol{W},\boldsymbol{S},\boldsymbol{C}]})=H(A^{[W,S,C]})$. Thus, the Specialized GRS Code protocol has the rate $L/H(A^{[\boldsymbol{W},\boldsymbol{S},\boldsymbol{C}]}) = L/H(A^{[W,S,C]}) = (K-M)^{-1}$. 

Next, we prove that the Specialized GRS Code protocol is a PIR-PCSI--I protocol. It should be obvious from the construction that the recoverability condition is satisfied. The $(W,S)$-privacy condition is also satisfied because the $(K,K-M)$ GRS code, generated by the matrix $G$, is an MDS code, and thereby, the minimum (Hamming) weight of a codeword is $K-(K-M)+1 = M+1$, and there are the same number of minimum-weight codewords for any support of size ${M+1}$~\cite{Roth:06}. Thus, for any $S\in \mathcal{S}$ and any $W\not\in S$, the dual code, whose parity check matrix is given by $G$, contains the same number of parity check equations (with support $S\cup W$) from each of which, given $Y^{[S,C]}$ for some $C\in\mathcal{C}$, $X_W$ can be recovered.   
\end{proof}

\section{The PIR-PCSI--II Problem}\label{sec:PIRPCSI-II}

\subsection{Converse for Theorem~\ref{thm:PIRPCSI-II}}\label{subsec:ConvThm2}
In this section, we give an information-theoretic proof of converse for Theorem~\ref{thm:PIRPCSI-II}.

\begin{lemma}\label{lem:Conv2}
For any $2\leq M\leq K$, the scalar-linear capacity of PIR-PCSI--II is upper bounded by ${(K-M+1)^{-1}}$.	
\end{lemma}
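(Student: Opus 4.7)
The plan is to fix an arbitrary realization $(W,S,C)$ with $W\in S$ and show that, for every realizable query $Q\triangleq Q^{[W,S,C]}$, the $\mathbb{F}_q$-coefficient subspace $V_A\subseteq \mathbb{F}_q^K$ of the scalar-linear answer $A\triangleq A^{[W,S,C]}$ satisfies $\dim V_A\geq K-M+1$. Since, in a scalar-linear protocol, the answer is a list of $\mathbb{F}_q$-linear combinations of the independent uniform messages $X_1,\dots,X_K$, one has $H(A\mid Q=q)=\dim V_A^{(q)}\cdot L$, so this bound will give $H(A^{[W,S,C]})\geq H(A\mid Q)\geq (K-M+1)L$ for every $(W,S,C)$ with $I^{[W,S]}=1$, and hence the desired rate bound $(K-M+1)^{-1}$.

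To extract the structural constraint on $V_A$, I would invoke Lemma~\ref{prop:1} with $\theta=1$: for every $W^{*}\in\mathcal{K}$ and every $S^{*}\in\mathcal{S}$ with $W^{*}\in S^{*}$, there is some $C^{*}\in\mathcal{C}$ such that $X_{W^{*}}$ is recoverable from $(A,Q,Y^{[S^{*},C^{*}]})$. Writing $y^{*}\in\mathbb{F}_q^K$ for the coefficient vector of $Y^{[S^{*},C^{*}]}$ (supported on $S^{*}$ with every entry in $S^{*}$ nonzero), scalar-linear recoverability translates to the existence of $\alpha\in\mathbb{F}_q$ with $e_{W^{*}}-\alpha y^{*}\in V_A$. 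Let $H$ be any parity-check matrix of $V_A$, with columns $H_1,\dots,H_K$, so that $V_A=\ker H$ and $\mathrm{rank}(H)=K-\dim V_A$. The condition rewrites as $H_{W^{*}}=\alpha\sum_{j\in S^{*}}y^{*}_j H_j$, or equivalently $(1-\alpha y^{*}_{W^{*}})H_{W^{*}}=\alpha\sum_{j\in S^{*}\setminus\{W^{*}\}}y^{*}_j H_j$. A split on whether $\alpha=0$, whether $\alpha y^{*}_{W^{*}}=1$, or otherwise forces one of three structural alternatives to hold: \textbf{(i)}~$H_{W^{*}}=0$; \textbf{(ii)}~$H_{W^{*}}$ lies in the $\mathbb{F}_q$-span of $\{H_j:j\in S^{*}\setminus\{W^{*}\}\}$ via a representation in which \emph{all} coefficients are nonzero; or \textbf{(iii)}~the columns $\{H_j:j\in S^{*}\setminus\{W^{*}\}\}$ admit a nontrivial $\mathbb{F}_q$-linear dependence with all coefficients nonzero.

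The crux is to deduce $\mathrm{rank}(H)\leq M-1$ from this trichotomy. I would argue by contradiction: if $\mathrm{rank}(H)\geq M$, choose $M$ linearly independent columns $H_{j_1},\dots,H_{j_M}$, set $S^{*}\triangleq\{j_1,\dots,j_M\}$ (of size $M$), and take $W^{*}\triangleq j_1\in S^{*}$. Then (i) fails because $H_{j_1}\neq 0$; (ii) fails because $H_{j_1}\notin \mathrm{span}(H_{j_2},\dots,H_{j_M})$, let alone with a prescribed coefficient pattern; and (iii) fails because $H_{j_2},\dots,H_{j_M}$ are linearly independent and admit no nontrivial relation at all. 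This contradicts the consequence of Lemma~\ref{prop:1} applied to $(W^{*},S^{*})$, so $\mathrm{rank}(H)\leq M-1$, i.e., $\dim V_A\geq K-M+1$, as required. I expect the main subtlety to be precisely the case analysis around the ``all coefficients nonzero'' requirement in (ii) and (iii); the payoff is that selecting $M$ independent columns and then isolating any one of them \emph{simultaneously} defeats all three alternatives, which is exactly why $M$ (and not $M-1$ or $M+1$) is the correct threshold.
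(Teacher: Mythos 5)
Your proof is correct and takes a genuinely different route from the paper's. The paper's converse is an entropy-chain argument with a two-case split on whether any message is directly recoverable from $(A,Q)$ alone; each branch applies Lemma~\ref{prop:1} to a carefully chosen family of side-information sets $S_j$ sharing a common pivot index, defines residuals $Z_j\triangleq Y_j-c_jX_{\mathrm{pivot}}$, uses scalar-linearity and privacy to conclude $H(Z_j\mid A,Q,X_{\mathrm{pivot}})=0$, and then sums the resulting independent entropy contributions. Your argument replaces that machinery with a single linear-algebra step: letting $H$ be a parity-check matrix of the answer's coefficient space $V_A^{(q)}$, Lemma~\ref{prop:1} forces $(1-\alpha y^{*}_{W^{*}})H_{W^{*}}=\alpha\sum_{j\in S^{*}\setminus\{W^{*}\}}y^{*}_jH_j$ for every pair $(W^{*},S^{*})$ with $W^{*}\in S^{*}$; choosing $S^{*}$ to index $M$ linearly independent columns and $W^{*}\in S^{*}$ makes all three branches of your trichotomy fail simultaneously, since linear independence forces every coefficient in that relation to vanish, which would require both $\alpha=0$ and $\alpha y^{*}_{W^{*}}=1$ (and $M\geq 2$ is exactly what guarantees a second index is present to produce the contradiction). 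This yields $\operatorname{rank}(H)\leq M-1$ directly, avoids the paper's case split entirely, and makes transparent both where scalar-linearity enters and why $M$ is the exact threshold. The only caveat is the conversion $H(A\mid Q=q)=\dim V_A^{(q)}\cdot L$: this tacitly assumes the messages remain uniform conditioned on $Q=q$, which is not automatic if the query is allowed to depend on the realization of $Y^{[S,C]}$ as the formulation permits; the paper guards against this in its case (ii) by conditioning on $(Q,Y)$ jointly. The rank argument itself is unaffected, and a small adjustment along those lines would make the entropy accounting airtight.
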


\begin{proof}
Fix $W$, $S$, and $C$ (and $Y\triangleq Y^{[S,C]}$) such that $I^{[W,S]}=1$. Let $Q\triangleq Q^{[W,S,C]}$ and $A\triangleq A^{[W,S,C]}$ be the query and the answer of an arbitrary scalar-linear PIR-PCSI--II protocol. We need to show that $H(A)\geq {(K-M+1)L}$. Let $I$ be the set of all $j\in \mathcal{K}$ such that $H(X_j|A,Q)=0$, i.e., $X_j$ is recoverable from $A$ (and $Q$) directly. Let $X_{I}\triangleq \{X_j\}_{j\in I}$. There are two cases: (i) $I\neq \emptyset$, and (ii) $I=\emptyset$.   

\emph{Case (i):} Since $X_I$ and $Q$ are independent and $H(X_I|A,Q)=0$ (by assumption), then 
\begin{align}
H(A) & \geq H(A|Q) +H(X_I|A,Q)\nonumber\\
& = H(X_I|Q)+H(A|Q,X_I)\nonumber\\
& = H(X_I)+H(A|Q,X_I).\label{eq:line6}
\end{align} If ${|I|\geq K-M+1}$, then $H(X_I)\geq (K-M+1)L$, and subsequently, $H(A)\geq (K-M+1)L$, as was to be shown. If $|I|\leq K-M$, $H(A|Q,X_I)$ can be further lower bounded as follows. Let $n\triangleq |I|$. Assume, w.l.o.g., that $I=[n]$. Let $J\triangleq [K-M-n+1]$, and $S_j\triangleq {\{n+1,n+j+1,\dots,n+j+M-1\}}$ for $j\in J$. (Note that $|J|=K-M-n+1$.) By Lemma~\ref{prop:1}, for any $j\in J$, there exists $C_j\in \mathcal{C}$ (and accordingly, $Y_j\triangleq Y^{[S_j,C_j]}$) such that $H(X_{n+1}|A,Q,Y_j)=0$. Let $Z_j\triangleq Y_j - c_jX_{n+1}$ where $c_j$ is the coefficient of $X_{n+1}$ in $Y_j$. By the scalar-linearity of $A$, it is easy to see that either $H(Z_j|A,Q)=0$ or ${H(Z_j+c^{*}_jX_{n+1}|A,Q)=0}$ for some $c^{*}_j\in \mathbb{F}^{\times}_q\setminus \{c_j\}$. (Otherwise, the server learns that the user's demand index and side information index set cannot be $n+1$ and $S_j$, respectively. This obviously violates the $(W,S)$-privacy condition.) Thus, $H(Z_j|A,Q,X_{n+1})=0$. Let $Z_{J}\triangleq \{Z_j\}_{j\in J}$. Then, we have
\begin{align}
H(A|Q,X_I) & \geq H(A|Q,X_I,X_{n+1})\nonumber \\
& = H(A|Q,X_I,X_{n+1})\nonumber\\
& \quad + H(Z_J|A,Q,X_I,X_{n+1})\label{eq:line7}\\
& = H(Z_J|Q,X_I,X_{n+1}) \nonumber \\
&\quad + H(A|Q,X_I,X_{n+1},Z_J)\nonumber\\
& \geq H(Z_J) \label{eq:line8}
\end{align} where~\eqref{eq:line7} holds since $H(Z_j|A,Q,X_{n+1})=0$ for all $j\in J$ (by assumption); and~\eqref{eq:line8} follows because $Z_J$ is independent of $(Q,X_I,X_{n+1})$, noting that $Z_J$, $X_I$, and $X_{n+1}$ are linearly independent (by construction). By the linear independence of $Z_j$'s for all $j\in J$, it follows that $H(Z_J) = {(K-M-n+1)L}$. By~\eqref{eq:line6} and~\eqref{eq:line8}, we get $H(A)\geq {nL}+{(K-M-n+1)L} ={ (K-M+1)L}$.  

\emph{Case (ii):} Assume, w.l.o.g., that ${W=1}$ and $S=[M]$. Let $J\triangleq [K-M]$, and $S_j\triangleq {\{1,j+2,\dots,j+M-2\}}$ for $j\in J$. (Note that $|J|=K-M$.) Similarly as in the case (i), define $Y_j$ (and accordingly $Z_j$) for all $j\in J$, where $X_{n+1}$ is replaced by $X_1$. By using a similar argument as before, it can be shown that $H(Z_j|A,Q,X_1)=0$ for all $j\in J$. Let $Z_{J}\triangleq \{Z_j\}_{j\in J}$. Then, we can write
\begin{align}
H(A) &\geq H(A|Q,Y)\nonumber \\
& = H(A|Q,Y)+H(X_1|A,Q,Y)\label{eq:line9}\\
& = H(X_1|Q,Y)+H(A|Q,Y,X_1)\nonumber \\
& = H(X_1) + H(A|Q,Y,X_1)\nonumber \\
& \quad +H(Z_J|A,Q,Y,X_1)\label{eq:line10}\\
& = H(X_1)+H(Z_J|Q,Y,X_1)\nonumber\\
& \quad + H(A|Q,Y,X_1,Z_J)\nonumber\\
& \geq H(X_1)+H(Z_J)\label{eq:line11}
\end{align} where~\eqref{eq:line9} follows since $H(X_1|A,Q,Y)=0$ (by the recoverability condition);~\eqref{eq:line10} holds because ${H(Z_j|A,Q,X_1)=0}$, and subsequently, $H(Z_j|A,Q,Y,X_1)=0$, for all $j\in J$; and~\eqref{eq:line11} follows because $Z_J$ is independent of $(Q,Y,X_1)$ (due to the linear independence of $Z_J$, $Y$, and $X_1$). Since $|J|=K-M$, we have $H(Z_J) = (K-M)L$ (noting that $Z_j$'s are linearly independent), and thereby, $H(A)\geq L+(K-M)L=(K-M+1)L$.
\end{proof}

\subsection{Achievability for Theorem \ref{thm:PIRPCSI-II}}\label{subsec:AchThm2}
In this section, we propose a PIR-PCSI--II protocol, which is a slightly modified version of the Specialized GRS Code protocol, that achieves the rate $(K-M+1)^{-1}$ for arbitrary $K$ and $M$. 

\textbf{Modified Specialized GRS Code Protocol:} This protocol consists of four steps, where the steps 2-4 are the same as those in the Specialized GRS Code protocol (Section~\ref{subsec:AchThm1}), except that $M$ is replaced with $M-1$ everywhere. The step~1 of the proposed protocol is as follows: 

\emph{Step 1:} The user first constructs a polynomial  ${p(x) = \sum_{i=0}^{K-M} p_i x^i \triangleq \prod_{i\not\in S} (x-\omega_i)}$, and then constructs $K-M+1$ sequences $Q_1,\dots,Q_{K-M+1}$, each of length $K$, such that $Q_i=\{v_1\omega_1^{i-1},\dots,v_K\omega_K^{i-1}\}$ for $i\in [K-M]$, where $v_i=\frac{c_i}{p(\omega_i)}$ for $i\in S\setminus W$; $v_W=\frac{c}{p(\omega_W)}$ where $c$ is chosen uniformly at random from $\mathbb{F}^{\times}_q\setminus \{c_W\}$; and $v_i$ is a randomly chosen element from $\mathbb{F}_q^{\times}$ for $i\not\in S$. 

\begin{lemma}\label{lem:Ach2}
The Modified Specialized GRS Code protocol is a PIR-PCSI--II protocol, and achieves the rate $(K-M+1)^{-1}$. 
\end{lemma}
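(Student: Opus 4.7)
The plan is to adapt the proof of Lemma~\ref{lem:Ach1} by verifying the three properties (rate, recoverability, and $(W,S)$-privacy) in turn, where only the last requires a genuinely new argument.

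For the rate, I would repeat the entropy-counting argument of Lemma~\ref{lem:Ach1}. The matrix $G$ built in Step~1 generates a $(K,K-M+1)$ GRS code with distinct locators $\omega_1,\dots,\omega_K$ and non-zero multipliers $v_1,\dots,v_K$; being MDS, its $K-M+1$ rows are linearly independent, so the answers $A_1,\dots,A_{K-M+1}$ are $K-M+1$ linearly independent $\mathbb{F}_q$-linear combinations of the iid uniform messages over $\mathbb{F}_{q^m}$. They are therefore jointly uniform and mutually independent, which yields $H(A^{[W,S,C]})=(K-M+1)L$. Since this value does not depend on $(W,S,C)$ with $I^{[W,S]}=1$, we also get $H(A^{[\boldsymbol{W},\boldsymbol{S},\boldsymbol{C}]})=(K-M+1)L$ and rate $(K-M+1)^{-1}$.

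For recoverability, the key identity is that the ``specialized'' codeword $\sum_{i=1}^{K-M+1} p_{i-1}\, g_i$ has $j$th entry $v_j\, p(\omega_j)$. By the definitions in Step~1 this entry equals $c_j$ for $j\in S\setminus\{W\}$, equals $c$ for $j=W$ (noting $p(\omega_W)\neq 0$ since $W\in S$), and vanishes for $j\notin S$. Hence the user's corresponding linear combination of the $A_i$'s evaluates to $c X_W + \sum_{j\in S\setminus\{W\}} c_j X_j$, and subtracting $Y^{[S,C]}=c_W X_W + \sum_{j\in S\setminus\{W\}} c_j X_j$ leaves $(c-c_W)X_W$; since $c\in\mathbb{F}_q^\times\setminus\{c_W\}$, $X_W$ is recovered by division.

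The main obstacle is the $(W,S)$-privacy proof. Because the locators $\omega_j$ are public and the permutation $\sigma$ is uniform and independent of $(\boldsymbol{W},\boldsymbol{S},\boldsymbol{C})$, the query is information-theoretically equivalent to the multiplier vector $v=(v_1,\dots,v_K)$. My plan is to show that, for every $(W,S)$ with $W\in S$, the conditional distribution of $v$ is the uniform distribution on $(\mathbb{F}_q^\times)^K$, after which $(W,S)$-privacy follows by Bayes' rule. For $j\notin S$, $v_j$ is drawn independently and uniformly on $\mathbb{F}_q^\times$ by construction. For $j\in S\setminus\{W\}$, $v_j=c_j/p(\omega_j)$ is uniform on $\mathbb{F}_q^\times$ because $c_j$ is and $p(\omega_j)\neq 0$. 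The delicate case is $j=W$: $v_W=c/p(\omega_W)$, where $c$ is uniform on $\mathbb{F}_q^\times\setminus\{c_W\}$, which is \emph{not} uniform conditional on $c_W$. The key observation is that marginalizing over the uniform $c_W\in\mathbb{F}_q^\times$ restores uniformity, since for any $c^{*}\in\mathbb{F}_q^\times$,
\[P(c=c^{*})=\sum_{c_W\neq c^{*}}\frac{1}{q-1}\cdot\frac{1}{q-2}=\frac{1}{q-1}.\]
This is precisely why drawing $c$ from $\mathbb{F}_q^\times\setminus\{c_W\}$ rather than reusing $c_W$ is the correct modification of the Specialized GRS Code protocol: it keeps the server's view of $v_W$ marginally uniform while still allowing $X_W$ to be recovered via $c\neq c_W$. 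Independence across coordinates of $v$ follows because each $v_j$ is a function of a disjoint subset of the mutually independent random quantities $\{c_i\}_{i\in S\setminus\{W\}}$, $\{v_i\}_{i\notin S}$, and $(c_W,c)$. Once $v$ is shown to be iid uniform under every admissible $(W,S)$, the server's posterior on $(\boldsymbol{W},\boldsymbol{S})$ given the query and $I^{[W,S]}=1$ coincides with the prior, establishing $(W,S)$-privacy.
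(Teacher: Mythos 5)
Your proof is correct, and for the privacy part it takes a genuinely different route than the paper. The paper's proof of this lemma simply refers back to Lemma~\ref{lem:Ach1} (with $M\to M-1$ and $W\notin S\to W\in S$), whose $(W,S)$-privacy argument is combinatorial: it invokes the fact that a $(K,K-M)$ MDS code has the same number of minimum-weight codewords on every support of size $M+1$, and reasons from this symmetry that no candidate pair $(W',S')$ is distinguished by the query. You instead compute the conditional law of the multiplier vector $v$ directly and show it is i.i.d.\ uniform on $(\mathbb{F}_q^\times)^K$ under every admissible $(W,S)$, from which privacy follows by Bayes. Your route is more elementary (no appeal to MDS weight-enumerator facts) and more transparent about Model~II specifically: the marginalization $\sum_{c_W\neq c^*}\tfrac{1}{q-1}\cdot\tfrac{1}{q-2}=\tfrac{1}{q-1}$ is exactly what justifies drawing $c$ from $\mathbb{F}_q^\times\setminus\{c_W\}$, a point the paper's weight-counting language leaves implicit. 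The rate computation is the same as the paper's, and your recoverability check spells out what the paper calls ``obvious from the construction'' (and, as a side effect, you use the coefficient pattern $p_{i-1}$, which gives the clean identity $v_j p(\omega_j)$; the paper's Step~4 writes $p_{K-M-i}$, which does not).

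One small thing to tighten: the assertion that the query is ``information-theoretically equivalent to $v$'' deserves a sentence. Since the $\omega_j$ are public, the coordinate-wise ratio of any two received rows is $\omega_j^{a-b}$, which lets the server sort the rows and read off $v$; conversely $Q$ is a deterministic function of $(v,\sigma)$ with $\sigma$ uniform and independent of $(\boldsymbol{W},\boldsymbol{S},\boldsymbol{C},v)$. This gives the Markov chain $(\boldsymbol{W},\boldsymbol{S})\to v\to Q$ together with $H(v\mid Q)=0$, which is precisely what is needed to pass from $P(v\mid W,S)=P(v)$ to $P(W,S\mid Q)=P(W,S)$.
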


\begin{proof}
The proof, omitted to avoid repetition, follows from the same lines as in the proof of Lemma~\ref{lem:Ach1} where $M$ is replaced by $M-1$, and $W\not\in S$ is replaced by $W\in S$. 
\end{proof}

\bibliographystyle{IEEEtran}
\bibliography{PIR_salim,pir_bib,coding1,coding2}

\begin{thebibliography}{10}
\providecommand{\url}[1]{#1}
\csname url@samestyle\endcsname
\providecommand{\newblock}{\relax}
\providecommand{\bibinfo}[2]{#2}
\providecommand{\BIBentrySTDinterwordspacing}{\spaceskip=0pt\relax}
\providecommand{\BIBentryALTinterwordstretchfactor}{4}
\providecommand{\BIBentryALTinterwordspacing}{\spaceskip=\fontdimen2\font plus
\BIBentryALTinterwordstretchfactor\fontdimen3\font minus
  \fontdimen4\font\relax}
\providecommand{\BIBforeignlanguage}[2]{{%
\expandafter\ifx\csname l@#1\endcsname\relax
\typeout{** WARNING: IEEEtran.bst: No hyphenation pattern has been}%
\typeout{** loaded for the language `#1'. Using the pattern for}%
\typeout{** the default language instead.}%
\else
\language=\csname l@#1\endcsname
\fi
#2}}
\providecommand{\BIBdecl}{\relax}
\BIBdecl

\bibitem{Sun2017}
H.~Sun and S.~A. Jafar, ``The capacity of private information retrieval,''
  \emph{IEEE Trans. on Info. Theory}, vol.~63, no.~7, pp. 4075--4088, July
  2017.

\bibitem{JafarPIR3new}
------, ``The capacity of robust private information retrieval with colluding
  databases,'' \emph{IEEE Trans. on Info. Theory}, vol.~64, no.~4, pp.
  2361--2370, April 2018.

\bibitem{Kadhe2017}
S.~Kadhe, B.~Garcia, A.~Heidarzadeh, S.~E. Rouayheb, and A.~Sprintson,
  ``Private information retrieval with side information: The single server
  case,'' in \emph{2017 55th Annual Allerton Conf. on Commun., Control, and
  Computing}, Oct 2017, pp. 1099--1106.

\bibitem{Tandon2017}
R.~Tandon, ``The capacity of cache aided private information retrieval,'' in
  \emph{55th Annual Allerton Conf. on Commun., Control, and Computing}, Oct
  2017, pp. 1078--1082.

\bibitem{Wei2017CacheAidedPI}
Y.~Wei, K.~Banawan, and S.~Ulukus, ``Cache-aided private information retrieval
  with partially known uncoded prefetching: Fundamental limits,'' \emph{IEEE
  Journal on Selected Areas in Communications}, vol.~36, no.~6, pp. 1126--1139,
  June 2018.

\bibitem{Wei2017FundamentalLO}
------, ``Fundamental limits of cache-aided private information retrieval with
  unknown and uncoded prefetching,'' \emph{IEEE Trans. on Info. Theory}, pp.
  1--1, 2018.

\bibitem{KGHERS2017}
\BIBentryALTinterwordspacing
S.~Kadhe, B.~Garcia, A.~Heidarzadeh, S.~E. Rouayheb, and A.~Sprintson,
  ``Private information retrieval with side information,'' \emph{CoRR}, vol.
  abs/1709.00112, 2017. [Online]. Available:
  \url{http://arxiv.org/abs/1709.00112}
\BIBentrySTDinterwordspacing

\bibitem{Chen2017side}
\BIBentryALTinterwordspacing
Z.~Chen, Z.~Wang, and S.~Jafar, ``The capacity of private information retrieval
  with private side information,'' \emph{CoRR}, vol. abs/1709.03022, 2017.
  [Online]. Available: \url{http://arxiv.org/abs/1709.03022}
\BIBentrySTDinterwordspacing

\bibitem{HKS2018}
\BIBentryALTinterwordspacing
A.~Heidarzadeh, F.~Kazemi, and A.~Sprintson, ``Capacity of single-server
  single-message private information retrieval with coded side information,''
  June 2018. [Online]. Available: \url{arXiv:1806.00661}
\BIBentrySTDinterwordspacing

\bibitem{Maddah2018}
\BIBentryALTinterwordspacing
S.~P. Shariatpanahi, M.~J. Siavoshani, and M.~A. Maddah-Ali, ``Multi-message
  private information retrieval with private side information,'' May 2018.
  [Online]. Available: \url{arXiv:1805.11892}
\BIBentrySTDinterwordspacing

\bibitem{HKGRS:2018}
A.~Heidarzadeh, S.~Kadhe, B.~Garcia, S.~E. Rouayheb, and A.~Sprintson, ``On the
  capacity of single-server multi-message private information retrieval with
  side information,'' in \emph{2018 56th Annual Allerton Conf. on Commun.,
  Control, and Computing}, Oct 2018.

\bibitem{LG:2018}
S.~Li and M.~Gastpar, ``Single-server multi-message private information
  retrieval with side information,'' in \emph{2018 56th Annual Allerton Conf.
  on Commun., Control, and Computing}, Oct 2018.

\bibitem{Roth:06}
R.~Roth, \emph{Introduction to Coding Theory}.\hskip 1em plus 0.5em minus
  0.4em\relax New York, NY, USA: Cambridge University Press, 2006.

\end{thebibliography}

\end{document}